\icmltitlerunning{Anomaly Detection Model for Imbalanced Datasets}
\newtheorem{proposition}{Proposition}
\theoremstyle{definition}
\begin{document}

\twocolumn[
\icmltitle{Anomaly Detection Model for Imbalanced Datasets}





\begin{icmlauthorlist}
\icmlauthor{R\'egis Houssou}{heig}
\icmlauthor{Stephan Robert-Nicoud}{heig}
\end{icmlauthorlist}

\icmlaffiliation{heig}{School of Management and Engineering Vaud (HEIG-VD), University of Applied Sciences and Arts of Western Switzerland (HES-SO), CH-1400 Yverdon-les-Bains}

\icmlcorrespondingauthor{R\'egis Houssou}{regis.houssou@heig-vd.ch}

\icmlkeywords{Machine Learning, ICML}

\vskip 0.3in
]




\printAffiliationsAndNotice{} 

\begin{abstract}
This paper proposes a method to detect bank frauds using a mixed approach combining a stochastic intensity model with the probability of fraud observed on transactions. It is a dynamic unsupervised approach which is able to predict financial frauds. The fraud prediction probability on the financial transaction is derived as a function of the dynamic intensities. In this context, the Kalman filter method is proposed to estimate the dynamic intensities. The application of our methodology to financial datasets shows a better predictive power in higher imbalanced data compared to other intensity-based models.
\end{abstract}

\section{Introduction}
\label{Introduction}
Financial fraud is growing exponentially, especially because of the large sums involved. It is an issue that has wide consequences in both the finance industry and the daily life. Fraud can reduce confidence in industry, destabilise economies, and affect people's cost of living. However, as a first step, banks and financial institutions have approached the detection of fraud using traditional approaches based on manual techniques such as auditing, which are inefficient and unreliable due to the complexities  associated with the problem. This is a very relevant problem that demands the attention of communities such as machine learning and data science where the solution of the problem can be automated, and evolve the detection of fraud towards methods using adaptive rules to tighten the mesh of the network.

 
The machine learning models work with many parameters and are much more efficient at finding subtle correlations in the data, which can be masked by an expert system or by human criticism, \cite{Dyz}. The large volume of transactional data and client data readily available in the financial services industry makes it an ideal tool for the application of complex machine learning algorithms. In addition to learning from known models, machine learning can go further and learn new models without human operation. This allows models to adapt over time to discover previously unknown patterns or to identify new tactics that can be used by fraudsters. In fact, the development of conventional machine learning algorithms has led them to solve some specific problems, one of the most important features of which is that the distribution of data is generally balanced, unlike financial fraud, which is not balanced. Most standard classifiers such as decision trees and neural networks assume that learning samples are evenly distributed among different classes. However, in many real-world applications, the ratio of the minority class is very small( 1:100, 1:1000 or can be exceeded at 1:10000).  Due to the lack of data, few samples of the minority learning class tend to be falsely detected by the classifiers and the decision limit is therefore far from correct. Numerous research works in machine learning has been proposed to solve the problem of data imbalance; \cite{Garcia}, \cite{Galar}, \cite{Kraw}, \cite{Abra}, etc. However, most of these algorithms suffer from certain limitations in real-world applications, such as the loss of usual information, classification cost, excessive time, and adjustments, see \cite{Abra}.

\cite{regis} investigated the problem of fraud detection in imbalanced data using the Poisson process. They defined the fraud times as the jump times of the Poisson process with intensity that describes the instantaneous rate of fraud. They showed how to estimate the intensity function in deterministic form and how to predict fraud events. The comparison of their methodology to other baseline approaches shows a better predicting power especially in very imbalanced dataset. However, their approach suffers from some limitations such as - The reduced form of the model in the sense that the fraud detection depends uniquely of the intensity's parameters; the model does not look inside the subtle correlations in the data. - The deterministic form of the intensity meaning that the intensity is a function of times, so it is predictable. In addition, their model is a supervised approach for which the lack of labelled data constitutes the main constraint in fraud detection. 

In this paper, we address these issues by considering a stochastic process for the fraud intensity; in other words, the intensity is a function of time and for a fixed time it is a random variable. In contrast to \cite{regis}, the instantaneous rate of fraud is no longer predictable and this is more realistic. For the calibration purpose, we also consider the posterior probabilities of fraud observed on each transaction; we suppose these probabilities reflect the likelihood of fraud in the dataset and they take into account the hidden correlations between the features. Our approach is a mixed approach combining the stochastic intensity with the probability of fraud observed on transaction. For the intensity's model, we focussed on the Cox-Ingersoll-Ross (CIR) model assuming that the trend of the fraud intensity is mean-reverting and the fraud intensity is always positive. Another main advantage of choosing this process is that we can derive a closed form solution of the prediction probability of fraud. As the intensity is unobservable variable, we propose to estimate its values by the Kalman-Filter method where the intensity is updated by the probability of fraud observed on transaction. Finally, our model is unsupervised approach in the sense that labelled data with examples of fraud are not need for detecting fraud events. 

However, a lot of research based on the Kalman filter has been done in the financial fields such as the interest rate models, the volatility models, the pricing of the defaultable bonds; see \cite{simon}, \cite{Duan}, \cite{ray}, \cite{vo}, etc...

The rest of the paper is organized as follows. Section II focusses on the fraud detection in the context of the stochastic intensity; the Cox-Ingersoll-Ross (CIR) intensity model is investigated. The prediction probability of fraud is derived and the estimation process of the intensity is discussed. In the section III, the model is applied to financial datasets and the results are presented. The dataset was provided by NetGuardians \footnote{https://netguardians.ch}, a swiss company which develops solutions for banks to proactively prevent fraud.

\section{Fraud detection with stochastic intensity}
\subsection{Cox-Ingersoll-Ross (CIR) process for intensity}
Consider a financial institution such as a bank, an insurance company, a trading company, etc. and information about its clients. We are interested in the occurrence of fraud in client transactions for such an institution. The fraud event is then defined as a rare event occurring at a random time and resulting in significant financial losses for the client and the financial institution. Let define $(\Omega, \mathcal {F}, \mathbb {F}, \mathbb {P}) $, the filtered probability space with $\Omega$ denotes the possible states of the world, $\mathcal{F}$ is the $\sigma$-algebra, $\mathbb{F}=(\mathcal{F}_{t})_{t\geq 0}$ is the filtration with $\mathcal{F}_{t}$ contains all information up to time $t$ and $\mathcal{F}_{T}=\mathcal{F}$. $\mathbb{P}$ is the probability measure describing the likelihood of certain events. We  denote by $\lambda$, the intensity that represents the expected number of fraud events per unit of time. As in \cite{regis}, one assumes that $\lambda$ is a non-negative process. In addition, we consider that the intensity is stochastic and follows the Cox-Ingersoll-Ross (CIR) process

\begin{equation}
d\lambda_{t}= \kappa (\theta-\lambda_{t})dt+\sigma\sqrt{\lambda_{t}}dB_{t}
\label{lcir}
\end{equation}

where $\kappa$, $\theta$ and $\sigma$ are positive constant; $\kappa$ represents the rate of mean reverting, $\theta$ is the long run average, $\sigma$ is the volatility of the intensity and $(B_{t})$ is the Brownian motion under the probability $\mathbb{P}$. The Cox-Ingersoll-Ross (CIR) model is one of the most popular and commonly used stochastic intensity in both academic research and practical applications. The process was first developed in \cite{CIR} to model the term structure of interest rates; It is set up as a single-good, continuous time economy with a single state variable. Multivariate versions are developed later by \cite{longs} and \cite{chen}. \\
 When we impose the condition $2\kappa\theta>\sigma^{2}$ then the intensity $\lambda$ is always positive, otherwise we can only guarantee that it is non-negative (with a positive probability to terminate to zero). In fact, when the fraud intensity approaches $0$ then the volatility $\sigma\sqrt{\lambda_{t}}$ approaches $0$ cancelling the effect of the randomness, so the intensity rate remains always non-negative. Figure \ref{fig1} shows the simulations of the stochastic fraud intensity following the CIR model with various parameters. All simulations generate dynamic non-negative intensities which tend to move around a long-run mean $\theta$. 

\begin{figure}[h]
\advance\leftskip-1cm
\includegraphics[width=0.6\textwidth,left]{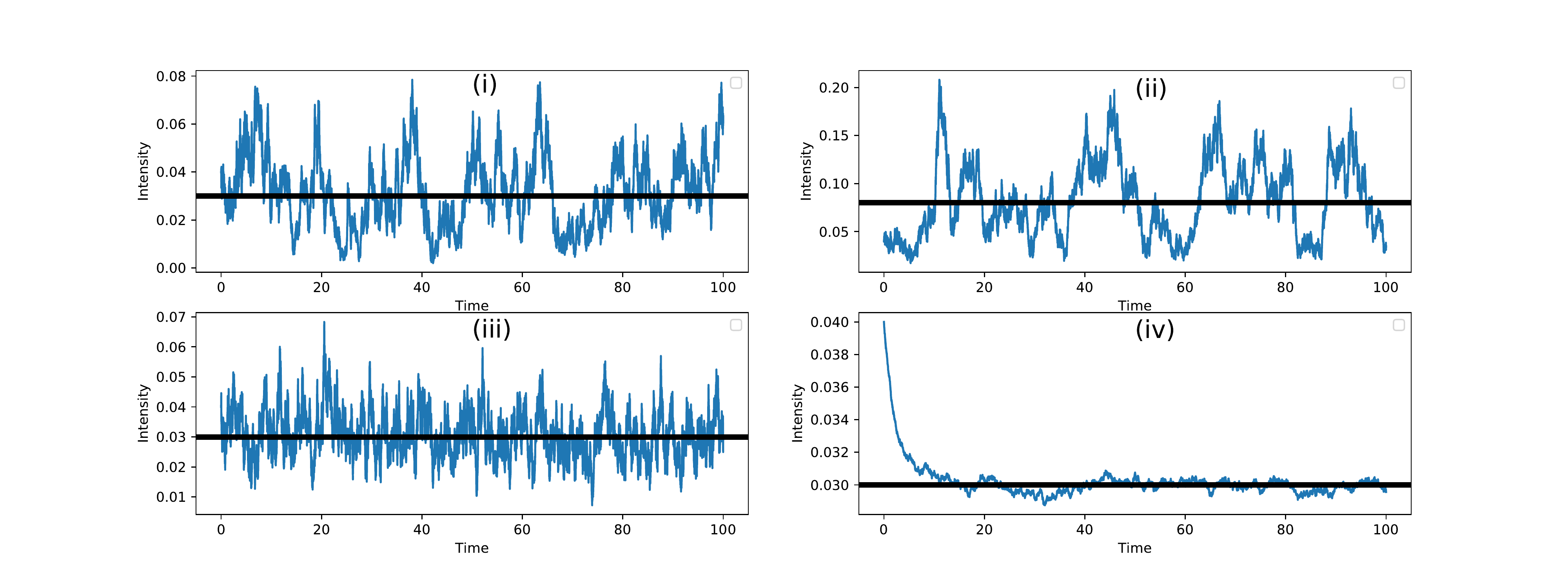}
\caption{Simulation of stochastic intensities with the CIR process. (i) $\lambda_{0}=0.04$, $\kappa=0.4$, $\theta=0.03$, $\sigma=0.1$. (ii) as for (i) but $\theta=0.08$. (iii) as for (i) but $\kappa=2$. (iv) as for (i) but $\sigma=0.002$.}
\label{fig1}
\end{figure}

In \cite{jaf}, it has been shown that
\begin{equation}
\lambda_{t}=e^{-\kappa t}\lambda_{0}+\theta(1-e^{-\kappa t})+\sigma e^{-\kappa t} \int^{t}_{0} e^{\kappa s}\sqrt{\lambda_{s}}dB_{s}
\label{cirt}
\end{equation}
with
\begin{equation}
E(\lambda_{t})=e^{-\kappa t}\lambda_{0}+\theta(1-e^{-\kappa t})
\label{mcirt}
\end{equation}
and
\begin{equation}
V(\lambda_{t})=\dfrac{\sigma^{2}}{\kappa}\lambda_{0}(e^{-\kappa t}-e^{-2\kappa t})+\dfrac{\theta \sigma^{2}}{2 \kappa}(1-e^{-\kappa t})^{2}
\label{vcirt}
\end{equation}
There is no general explicit solution for equation (\ref{cirt}). However, its calibration is critical for obtaining meaningful results. One of the easiest methods to implement it is to perform discretization of equation (\ref{cirt}) and then use available data for small time intervals, in order to be able to estimate the parameters. Let $\vartriangle_{t}=\frac{T}{N+1}$ and $t_{j}=j \cdot \vartriangle_{t}$ for  $j=0,...,N+1$. Equation (\ref{cirt}) becomes
\begin{equation}
\begin{aligned}
\lambda_{t_{i}}=e^{-\kappa \vartriangle_{t}}\lambda_{t_{i-1}}+ & \theta(1-e^{-\kappa \vartriangle_{t}})\\
& +\sigma e^{-\kappa t_{i}} \int^{t_{i}}_{t_{i-1}} e^{\kappa s}\sqrt{\lambda_{s}}dB_{s}
\end{aligned}
\label{mmcirt}
\end{equation}
From equation (\ref{mcirt}), $$E(\lambda_{t_{i}} \vert \lambda_{t_{i-1}})=\mu_{1}=e^{-\kappa \vartriangle_{t}}\lambda_{t_{i-1}}+\theta(1-e^{-\kappa \vartriangle_{t}})$$ 
and from equation (\ref{vcirt}) 
\begin{equation*}
\begin{aligned}
V(\lambda_{t_{i}} \vert \lambda_{t_{i-1}})=\mu_{2}= & \dfrac{\theta \sigma^{2}}{2 \kappa}(1-e^{-\kappa \vartriangle_{t}})^{2} \\
& +\dfrac{\sigma^{2}}{\kappa}(e^{-\kappa \vartriangle_{t}}-e^{-2\kappa \vartriangle_{t}})\lambda_{t_{i-1}}
\end{aligned}
\end{equation*}
equation (\ref{mmcirt}) is reduced to
\begin{equation}
\lambda_{t_{i}}=e^{-\kappa \vartriangle_{t}}\lambda_{t_{i-1}}+\theta(1-e^{-\kappa \vartriangle_{t}})+\epsilon_{t_{i}}
\label{ep}
\end{equation}
where $\epsilon_{t_{i}}=\sigma e^{-\kappa t_{i}} \int^{t_{i}}_{t_{i-1}} e^{\kappa s}\sqrt{\lambda_{s}}dB_{s}$ with $\epsilon_{t_{i}}$ which is an Ito integral with respect 
to the Brownian motion $(B_{t})$. Using the zero mean property $E(\epsilon_{t_{i}} \vert \epsilon_{t_{i-1}})=0$. From (\ref{ep})
$V(\epsilon_{t_{i}} \vert \epsilon_{t_{i-1}})=V(\lambda_{t_{i}} \vert \lambda_{t_{i-1}})=\mu_{2}$. Following \cite{CIR}, $\lambda_{t_{i}}$ given $\lambda_{t_{i-1}}$ is a non-central $\chi^{2}$ distribution with the first two moments $\mu_{1}$ and $\mu_{2}$. From \cite{Ball} and under the assumption of small time intervals, $\lambda_{t_{i}}$ given $\lambda_{t_{i-1}}$ can be reasonably approximated by a normal distribution with mean $\mu_{1}$ and variance $\mu_{2}$. Then,
$\lambda_{t_{i}}=e^{-\kappa \vartriangle_{t}}\lambda_{t_{i-1}}+\theta(1-e^{-\kappa \vartriangle_{t}})+\epsilon_{t_{i}}$ with $$\epsilon_{t_{i}} \sim \mathcal{N}(0,\dfrac{\theta \sigma^{2}}{2 \kappa}(1-e^{-\kappa \vartriangle_{t}})^{2}+\dfrac{\sigma^{2}}{\kappa}(e^{-\kappa \vartriangle_{t}}-e^{-2\kappa \vartriangle_{t}})\lambda_{t_{i-1}}).$$ Let us define $\alpha=\theta(1-e^{-\kappa \vartriangle_{t}})$, $\beta=e^{-\kappa \vartriangle_{t}}$ and $$\eta_{t_{i}}^{2}=\dfrac{\theta \sigma^{2}}{2 \kappa}(1-e^{-\kappa \vartriangle_{t}})^{2}+\dfrac{\sigma^{2}}{\kappa}(e^{-\kappa \vartriangle_{t}}-e^{-2\kappa \vartriangle_{t}})\lambda_{t_{i-1}}.$$ We can write
\begin{equation}
\lambda_{t_{i}}=\alpha + \beta \lambda_{t_{i-1}}+\epsilon_{t_{i}}
\label{sta1}
\end{equation}
with $\epsilon_{t_{i}} \sim \mathcal{N}(0,\eta_{t_{i}}^{2})$. According to the equation (\ref{sta1}), if we suppose that $V(\epsilon_{t})$ is constant, the process of $\{\lambda_{t_{i}}\}$ is a stationary AR(1) process. 

\subsection{Fraud detection in the context of the CIR intensity} 
\subsubsection{Prediction of fraud with the CIR intensity} 
We suppose that all the background information on the financial institution's transactions, except for the hours of fraud events fraud, is expressed by the filtration $\mathbb{G}=(\mathcal{G}_{t})_{t\geq 0}$. For example, $\mathcal{G}_{t}$ can be generated by a d-dimensional driving process $X_{t}$ which includes the information on transactions amounts, transaction dates, country of the receiving bank, client IDs, etc... Suppose further that there is a non-negative process $\lambda_{t}$ which is also adapted to $\mathbb{G}$ which plays the role of a stochastic intensity, generally correlated with the various components of the driving process $X_{t}$. Next assume that $\mathbb{H}=(\mathcal{H}_{t})_{t\geq 0}$ is the filtration generated by the fraud indicator process $1_{\{\tau \leqslant t\}}$. The full filtration for the model is obtained as $\mathbb{F}=\mathbb{G} \vee \mathbb{H}$ where $\mathbb{F}=(\mathcal{F}_{t})_{t\geq 0}$. Let $N=(N_{t})_{t\geq 0}$ where $N_{t}=\sum_{n\geqslant0} 1_{\{\tau_{n}\leqslant t\}}$, a counting process in the occurrence of fraud in a client's transactions. We say that $N=(N_{t})_{t\geq 0}$ is a doubly-stochastic Poisson process or a Cox process if, conditioned on the background information $\mathcal{G}_{t}$ available at time $t$, $N_{t}$ is an inhomogeneous Poisson process with a time-varying intensity $\lambda_{s}$, $0\leq s\leq t$. In other words, each realization of the process $\lambda_{t}$ determines the local jump probabilities for the process $N_{t}$. The intuition of the doubly-stochastic assumption is that $\mathcal{G}_{t}$ contains enough information to reveal the intensity $\lambda_{t}$, but not enough information to reveal the event times of the counting process $N$. That is why, the fraud time $\tau$ is a $\mathbb{F}$-stopping time but not a $\mathbb{G}$-stopping time. 
\begin{proposition}
Consider the filtration $\mathcal{F}_{t}$ that contains the information about the fraud events up to time $t$. Suppose that a new transaction is in progress at time $s$ ($s>t$). The probability of fraud occurring on the next transaction at time $s$ is given by 
\begin{equation}
P \left( N_{s}-N_{t}=1 \vert \mathcal{F}_{t} \right)=1-E\left(e^{-\int^{s}_{t}\lambda_{u} du} \vert \mathcal{F}_{t} \right)
\end{equation}
\label{bid}
\end{proposition}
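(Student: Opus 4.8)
The plan is to reduce the statement to the defining property of the doubly-stochastic Poisson (Cox) process and then apply the tower property of conditional expectation, working through the complementary event $\{N_s-N_t=0\}$ (no fraud on the next transaction) rather than directly with $\{N_s-N_t=1\}$. Since a single transaction either is fraudulent or is not, on the interval $(t,s]$ associated with the next transaction one has $\{N_s-N_t=1\}=\{N_s-N_t\geq 1\}$, so it suffices to compute $P(N_s-N_t=0\mid\mathcal{F}_t)$ and subtract it from one.

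First I would condition on $\mathcal{G}_s$, which, since $\lambda$ is $\mathbb{G}$-adapted, reveals the path of the intensity on $[0,s]$ and in particular the integral $\int_t^s\lambda_u\,du$. By the doubly-stochastic assumption, conditionally on $\mathcal{G}_s$ the restriction of $N$ to $[0,s]$ is an inhomogeneous Poisson process with intensity $\lambda$; hence $N_s-N_t$ is, conditionally on $\mathcal{G}_s$, a Poisson variable with mean $\int_t^s\lambda_u\,du$, which gives
\begin{equation*}
P\left(N_s-N_t=0 \mid \mathcal{G}_s\right)=e^{-\int_t^s \lambda_u\,du}.
\end{equation*}

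The crucial step is to enlarge the conditioning $\sigma$-algebra from $\mathcal{G}_s$ to $\mathcal{G}_s\vee\mathcal{H}_t$, thereby incorporating the observed past fraud history. Here I would use the independent-increments property contained in the conditional-Poisson assumption: given $\mathcal{G}_s$, the increment $N_s-N_t$ over $(t,s]$ is independent of the $\sigma$-algebra $\mathcal{H}_t$ generated by the fraud indicator up to time $t$. Consequently
\begin{equation*}
P\left(N_s-N_t=0 \mid \mathcal{G}_s\vee\mathcal{H}_t\right)=P\left(N_s-N_t=0 \mid \mathcal{G}_s\right)=e^{-\int_t^s \lambda_u\,du}.
\end{equation*}
Since $t<s$ we have $\mathcal{F}_t=\mathcal{G}_t\vee\mathcal{H}_t\subseteq\mathcal{G}_s\vee\mathcal{H}_t$, so the tower property yields
\begin{equation*}
P\left(N_s-N_t=0 \mid \mathcal{F}_t\right)=E\left[\,P\left(N_s-N_t=0 \mid \mathcal{G}_s\vee\mathcal{H}_t\right)\,\middle|\,\mathcal{F}_t\right]=E\left(e^{-\int_t^s \lambda_u\,du}\,\middle|\,\mathcal{F}_t\right).
\end{equation*}
Taking complements gives $P\left(N_s-N_t\geq 1\mid\mathcal{F}_t\right)=1-E\left(e^{-\int_t^s\lambda_u\,du}\mid\mathcal{F}_t\right)$, which is the asserted identity.

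I expect the main obstacle to be the rigorous justification of the conditional independence invoked in the middle step, namely that once the intensity path is known the future increment $N_s-N_t$ carries no additional dependence on the observed past $\mathcal{H}_t$. This is precisely the content of the doubly-stochastic hypothesis — that $\mathcal{G}_t$ determines the local jump rate $\lambda_t$ but not the realized jump times — and the argument rests on stating this property carefully and checking that the increments over the disjoint intervals $[0,t]$ and $(t,s]$ are conditionally independent given $\mathcal{G}_s$. The remaining ingredients, the Poisson no-jump formula and the tower property, are routine.
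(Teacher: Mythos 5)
Your proposal is correct and follows essentially the same route as the paper's proof: pass to the complementary event $\{N_s-N_t=0\}$, condition on $\mathcal{G}_s\vee\mathcal{H}_t$ to invoke the inhomogeneous-Poisson no-jump formula $e^{-\int_t^s\lambda_u\,du}$, and finish with the tower property down to $\mathcal{F}_t$. You are in fact somewhat more careful than the paper, which silently identifies $\{N_s-N_t=1\}$ with $\{N_s-N_t\geq 1\}$ and does not spell out the conditional independence of the increment from $\mathcal{H}_t$ given $\mathcal{G}_s$ that you rightly flag as the crux of the doubly-stochastic assumption.
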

\begin{proof}
Letting $A$ be the event $\left\{N_{s}-N_{t}\right\}$ of no fraud arrivals, the law of iterated expectations implies that, for $\tau>t$
\begin{align}
\begin{aligned}
P\left(N_{s}-\right.&\left.N_{t}=1 \vert \mathcal{F}_{t}\right)=1-P\left(N_{s}-N_{t}=0 \vert \mathcal{F}_{t}\right)\\
&=1-E\left(1_{A} \vert \mathcal{F}_{t}\right)\\
&=1-E\left(E\left(1_{A} \vert \mathcal{G}_{s} \vee  \mathcal{H}_{t}\right)\vert \mathcal{F}_{t}\right)\\
&=1-E\left(P\left(N_{s}-N_{t}=0 \vert \mathcal{G}_{s} \vee  \mathcal{H}_{t}\right)\vert \mathcal{F}_{t}\right)\\
&= 1-E\left(e^{-\int^{s}_{t}\lambda_{u} du} \vert \mathcal{F}_{t} \right)\\
\end{aligned}
\end{align}
\end{proof}
The last equation is derived by the fact that under the background information $\mathcal{G}_{s}$, $N_{s}$ is an inhomogeneous Poisson process.
\begin{proposition}
Suppose the intensity follows the stochastic  CIR process
\begin{equation*}
d\lambda_{t}= \kappa (\theta-\lambda_{t})dt+\sigma\sqrt{\lambda_{t}}dB_{t}
\end{equation*}
Under the assumptions of proposition (\ref{bid}), the probability that the fraud will occur on the next transaction at the time $s$ is given by 
\begin{equation}
P\left(N_{s}-N_{t}=1 \vert \mathcal{F}_{t}\right)=1-e^{A(s-t)-B(s-t)\lambda_{t}}
\label{fraud}
\end{equation}

where $\gamma=\sqrt{\kappa^{2}+2\sigma^{2}}$\\

\begin{equation*}
 B(s-t)=\frac{2 (e^{\gamma (s- t)}-1)}{2 \gamma+(\gamma +\kappa)(e^{\gamma (s-t)}-1)}
 \label{B}
 \end{equation*}
 
\begin{equation*}
 A(s-t)=\log \left[ \left\{\frac{2\gamma e^{(\kappa+\gamma)(\frac{(s-t)}{2})}}{2 \gamma+(\gamma+\kappa)(e^{\gamma (s-t)}-1)}\right\}^{ \frac{2\kappa   \theta}{\sigma^{2}}}\right] 
 \label{A}
\end{equation*}
\label{bida}
\end{proposition}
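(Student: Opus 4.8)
The plan is to start from Proposition \ref{bid}, which already reduces the claim to evaluating the conditional Laplace-type functional
\begin{equation*}
F(t,\lambda_{t}) := E\!\left(e^{-\int^{s}_{t}\lambda_{u}\,du}\,\big\vert\,\mathcal{F}_{t}\right).
\end{equation*}
So the entire content of the proposition is the computation of this expectation for the CIR dynamics \eqref{lcir}, and the assertion is that it has the exponential-affine form $e^{A(s-t)-B(s-t)\lambda_{t}}$. First I would observe that, because $\lambda$ is a time-homogeneous diffusion solving \eqref{lcir}, it is Markov with respect to $\mathbb{G}$, and the integrand depends only on the path of $\lambda$ on $[t,s]$; hence the conditional expectation collapses to a deterministic function of the current state, $E(e^{-\int_{t}^{s}\lambda_{u}du}\mid\mathcal{F}_{t})=F(t,\lambda_{t})$ where $F(t,x)=E(e^{-\int_{t}^{s}\lambda_{u}du}\mid\lambda_{t}=x)$. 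This is exactly the structure of a defaultable/zero-coupon bond price, so the affine-term-structure machinery applies.

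Next I would invoke the Feynman--Kac theorem. Writing $\mathcal{L}$ for the CIR generator $\mathcal{L}f=\kappa(\theta-x)f_{x}+\tfrac12\sigma^{2}x\,f_{xx}$, the function $F$ solves the terminal-value PDE
\begin{equation*}
\frac{\partial F}{\partial t}+\kappa(\theta-x)\frac{\partial F}{\partial x}+\frac{1}{2}\sigma^{2}x\frac{\partial^{2}F}{\partial x^{2}}-xF=0,\qquad F(s,x)=1.
\end{equation*}
I would then posit the affine ansatz $F(t,x)=e^{A(\tau)-B(\tau)x}$ with $\tau=s-t$, motivated by the fact that the CIR drift and squared-diffusion are both affine in the state. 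Substituting this guess into the PDE, dividing through by $F$, and matching the terms that are constant in $x$ against those linear in $x$ turns the single PDE into a decoupled system of ordinary differential equations
\begin{equation*}
\frac{dB}{d\tau}=1-\kappa B-\tfrac12\sigma^{2}B^{2},\qquad \frac{dA}{d\tau}=-\kappa\theta B,
\end{equation*}
with initial data $A(0)=0$ and $B(0)=0$ coming from $F(s,x)=1$.

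To finish I would solve this system explicitly. The equation for $B$ is a Riccati equation with constant coefficients; its stationary points are the roots of $\tfrac12\sigma^{2}B^{2}+\kappa B-1=0$, which are $(-\kappa\pm\gamma)/\sigma^{2}$ with $\gamma=\sqrt{\kappa^{2}+2\sigma^{2}}$, so separating variables and integrating by partial fractions yields precisely the stated $B(s-t)=\dfrac{2(e^{\gamma(s-t)}-1)}{2\gamma+(\gamma+\kappa)(e^{\gamma(s-t)}-1)}$. I would then substitute this $B$ into $A'(\tau)=-\kappa\theta B(\tau)$ and integrate, again with $A(0)=0$, recovering the logarithmic expression for $A(s-t)$. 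Inserting $F(t,\lambda_{t})=e^{A(s-t)-B(s-t)\lambda_{t}}$ back into the identity of Proposition \ref{bid} gives \eqref{fraud}.

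The two places demanding care are the justification of the Markov/Feynman--Kac reduction and the Riccati integration. The probabilistic step requires enough regularity of $F$ to apply Feynman--Kac (guaranteed here since the CIR transition law is non-central $\chi^{2}$ and $F$ is smooth in the interior), together with the observation that conditioning on $\mathcal{F}_{t}$ rather than $\mathcal{G}_{t}$ is harmless because the exponent involves only $\lambda$ on $[t,s]$. The genuinely laborious part is the explicit solution of the Riccati ODE and the subsequent integration for $A$ so that the closed forms match exactly; this is elementary but sign- and bookkeeping-sensitive, and is where I would expect to spend most of the effort verifying the stated formulas.
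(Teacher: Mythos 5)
Your proposal is correct and matches the paper's route: the paper's own ``proof'' is a one-line appeal to Proposition~\ref{bid} together with the closed-form bond-price formula of \cite{CIR} and \cite{CIRE}, and the Feynman--Kac/affine-ansatz/Riccati computation you outline is exactly the derivation contained in those references. Your write-up is in fact more complete than the paper's, since you also flag the (glossed-over) reduction from conditioning on $\mathcal{F}_{t}$ to conditioning on the state $\lambda_{t}$.
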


\begin{proof}
Using proposition (\ref{bid}) and following \cite{CIR} and \cite{CIRE}, we obtain (\ref{fraud}).
\end{proof}

Consequently, the prediction probability of fraud at time $s$ depends on the underlying parameters and on the dynamic intensity $\lambda_{t}$ with $s\geq t$. In the next section, we will focus on the form of relationship between the intensity and probability of fraud.

\subsubsection{Defining the measurement equation} 
Suppose we are now interested in the probability that no fraud will occur at the time $s$ given the filtration  $\left( \mathcal{F}_{t}\right)$  with $s>t$. We denote $Y_{s}$, the logarithm of this probability. From the proposition (\ref{bida}), $Y_{s}=\log \left[  P\left(N_{s}-N_{t}=0 \vert \mathcal{F}_{t}\right)\right] =A(s-t)-B(s-t)\lambda_{t}$. Let $a=A(s-t)$ and $b=-B(s-t)$; we have 
\begin{equation}
Y_{s}=a + b \lambda_{t}
\label{mas}
\end{equation}
The equation (\ref{mas}) shows an affine relationship between the logarithm of the probability of prediction for no fraud at time $s$ and the intensity at time $t$ with $s>t$. For simplicity and calibration reasons, we take $Y_{t}$ as proxy for $Y_{s}$ with $Y_{t}$ is the logarithm of the posterior probability for no fraud at time $t$. The main reason for choosing $Y_{t}$ instead of $Y_{s}$ is that $Y_{t}$ is known at time $t$ and therefore it will be useful later in the filtering methods. So, we introduce noises in the equation (\ref{mas}) to take into account the differences between $Y_{t}$ and $Y_{s}$. We assume that these noises are Gaussian white noises. Therefore, equation (\ref{mas}) is written 
\begin{equation}
Y_{t}=a + b \lambda_{t}+\mu_{t}
\label{mas1}
\end{equation}
with $\mu_{t}\sim \mathcal{N}(0,w_{t}^{2})$. Although the equation (\ref{mas1}) is affine in the state $\lambda_{t}$, the functions $a$ and $b$ are non-linear functions of the underlying parameters. Also for $s<t$, we always have $b<0$; this implies a negative relationship between the likelihood of no fraud and the intensity of fraud. However, despite the Gaussian assumption of $\epsilon_{t}$ in the autoregressive equation (\ref{sta1}), the maximum likelihood estimation of the intensity's parameters is no longer feasible because the intensity $\lambda_{t}$ is an unobserved variable and the probability density function is not available in a closed form. On the other hand, taking the equation (\ref{mas1}), it would be difficult to estimate the parameters $a$ et $b$ by the likelihood estimation for the same reason. Thus, filtering methods can be used to track the intensities based on the observed probabilities $Y_{t}$. The Kalman filter is proposed here to capture the dynamic intensities and to estimate the various parameters. In this context, two equations are required; the measurement equation (\ref{mas1}) that concerns the observed probability $Y_{t}$ and the state equation (\ref{sta1}) for the unknown intensity.

\subsection{Estimation of of the stochastic intensity} 
\subsubsection{Kalman filter in the estimation of the stochastic intensity} 
Now that the model in (\ref{lcir}) has been put in state space form conducting  to equations (\ref{sta1}) and (\ref{mas1}), the Kalman filter can be used to obtain information about the unobserved intensity $\lambda_{t}$ using the logarithm of observed probability for no fraud, $Y_{t}$ for $t=0,..,n$. Let's recall the measurement and state equations. Measurement equation: $Y_{t}=a + b \lambda_{t}+\mu_{t}$, with $\mu_{t}\sim \mathcal{N}(0,w_{t}^{2})$. State equation:
$\lambda_{t}=\alpha + \beta \lambda_{t-1}+\epsilon_{t}$, with $\epsilon_{t} \sim \mathcal{N}(0,\eta_{t}^{2}).$ where $a$, $b$, $\alpha$, $\beta$ and $\epsilon_{t}$ are functions of the unknown parameters $(\kappa,\theta,\sigma)$ of the model. The Kalman filter is actually a recursive algorithm for calculating estimates of unobserved state variables based on observations that depend on these state variables. It was first published in \cite{kalman} and it is used in areas as aeronautics, signal processing, and futures trading. A detailed explanation of the Kalman filter can be found in \cite{Harvey}, \cite{luth}, \cite{maybeck}, \cite{jaz} and \cite{hee}. The principle of the Kalman filter is to use a time series of observable data to estimate the values of state variables. This technique is useful when there is a linear dependency of the observable data on the state variables. In our case, we have this linearity relation between the probability of no fraud and the fraud intensity. The algorithm first forms an optimal predictor of the unobserved state variable vector given its previous estimated value. This prediction is obtained by using the distribution of the unobserved state variables, conditional on the previous estimated values. These estimates for the unobserved state variables are then updated using the information provided by the observed variables. Although the Kalman filter relies on the  normality assumption of the measurement error and initial state vector, one can calculate the likelihood function by decomposing the prediction error. Let 
$v_{t}$ be the variance of $\lambda_{t}$, $\lambda_{t-1 \vert t-1}$ an unbiased estimation of $\lambda_{t-1}$ at time $t-1$ and $v_{t-1 \vert t-1}$ the variance of $\lambda_{t-1 \vert t-1}$. The initial state $\lambda_{0}$ at time $0$ is a random variable which is not correlated with both the system and the measurement noise processes. 
A time $0$, we must have a preliminary value of $\lambda_{0 \vert 0}$ and $v_{0 \vert 0}$. As these values are unknown, a common way is to put a null value to $\lambda_{0 \vert 0}$ and a high value to $v_{0 \vert 0}$ in order to take into account the uncertainty linked to the estimate of $v_{0 \vert 0}$.  Let us give the three steps of the procedure followed by the Kalman filter: forecasting, updating and estimation of the parameters. First, we make the following forecasts:

\begin{enumerate}
\item $\lambda_{t \vert t-1}=E_{t-1}(\lambda_{t})$, that is the forecast of $\lambda_{t}$ conditional to the information set at time $(t-1)$. 
\begin{equation}
\lambda_{t \vert t-1}=\alpha + \beta \lambda_{t-1 \vert t-1}
 \end{equation}
$\lambda_{t \vert t-1}$ is an unbiased conditional estimation of $\lambda_{t}$. In fact, It is straightforward to check that  $E(\lambda_{t \vert t-1}-\lambda_{t})=0$.
\item  $v_{t \vert t-1}$ as the variance of $\lambda_{t \vert t-1}$, which is $v_{t \vert t-1}=E[(\lambda_{t \vert t-1}-\lambda_{t})^{2}]$. 
\begin{equation}
v_{t \vert t-1}=\beta^{2} v_{t-1 \vert t-1}+ \eta_{t}^{2}
\end{equation}
\end{enumerate}
The two forecasts $\lambda_{t \vert t-1}$ and $v_{t \vert t-1}$ will be used in the next step to update $\lambda_{t}$ and its variance. The second step is the update. At time $t$, we have a new observation of $Y$, i.e. $Y_{t}$. We can thus compute the prediction error $e_{t}$:
\begin{equation}
e_{t}=Y_{t}-a-b\lambda_{t \vert t-1}
\label{err1}
 \end{equation} 
 The variance of $e_{t}$, denoted by $\psi_{t}$ is given by :
\begin{equation}
\psi_{t}=w_{t}^{2}+b^{2}v_{t \vert t-1}
 \end{equation}  
 We use $e_{t}$ and $\psi_{t}$ to update $\lambda_{t \vert t}$ and its variance $v_{t \vert t}$ as follows 
\begin{equation}
\lambda_{t \vert t}=\lambda_{t \vert t-1}+K_{t}e_{t}
\label{unb}
 \end{equation} 
\begin{equation}
v_{t \vert t}=(1-K_{t}b)v_{t \vert t-1}
\label{unb1}
\end{equation}
with $K_{t}$ is Kalman gain defined as $K_{t}=\frac{b v_{t \vert t-1}}{\psi_{t}}$. The Kalman gain $K_{t}$ is  the  most  crucial  parameter  of  the  filter. This determines how easily the filter will adapt to all  possible new  conditions. In (\ref{unb}), $K_{t}$ guarantees that $\lambda_{t \vert t}$ will be an unbiaised estimator of $\lambda_{t}$. In (\ref{unb1}), it minimizes the variance $v_{t \vert t}$. Thus, $\lambda_{t \vert t}$ is a conditionally unbiased and efficient estimator. The Kalman filter is therefore optimal because it is the best estimator in the class of linear estimators. For more details on the Kalman gain derivation, see \cite{Hamilton} and \cite{Welch}. The third step concerns the estimation of the parameters. In our study, $4$ parameters have to be estimated: $\kappa$, $\theta$, $\sigma$ and the variances of the measurement error at each time step, $w_{t}$. From (\ref{err1}), the prediction error $e_{t}$ follows the normal distribution with mean $0$ and variance $\psi_{t}$. Based on the Gaussian distribution of $e_{t}$, we use the maximum likelihood method. The log-likelihood function can be written as follows:
\begin{equation}
l=-\frac{1}{2} \sum_{t} \text{log}(\psi_{t})-\frac{1}{2} \sum_{t} \frac{e_{t}^{2}}{\psi_{t}}
 \end{equation} 
To complete the procedure, we go to time $(t+1)$ and repeat the three-step procedure up to $n$. As discussed in \cite{Duan}, when the state space model is Gaussian, the Kalman filter provides an optimal solution to predict, update and evaluate the likelihood function. When the state-space model is non-Gaussian, the Kalman filter can still be applied to obtain approximate first and second moments of the model and the resulting filter is almost optimal. The use of this quasi-optimal filter gives an approximate quasi-likelihood function with which the estimation of the parameters can be  performed. So, our fraud detection approach is an unsupervised approach in the sense that the estimation of the dynamic intensities does not require the labels but the fraud probabilities observed on the transactions. This approach will be useful for the detection of fraudulent transaction for which the main constraint is the lack of labelling dataset.

\subsubsection{Issues with negative estimated values for fraud intensity} 

From the equation (\ref{lcir}), the intensity follows a non-central $\chi^{2}$ distribution and this guarantees that the intensity is always non-negative. However, the intensity is unobservable variable and in order to estimate its values, the approach by the Kalman filter is proposed. As noted in the previous section the Kalman filter uses the quasi maximum likelihood to estimate the intensity, since the true distribution of the intensity is not Gaussian. 
Therefore, there is a non-zero probability to obtain negative values for the intensity during the calibration process.
To deal with the possible negative values of the intensity, the following steps are proposed.

\paragraph{Step 1: Intensities Shift}  
This step consists in translating the intensity values obtained by the Kalman filter ($\lambda_{t \vert t}$) to positive values ($S_{t \vert t}$) to eliminate negative/near-zeros values. The following transformation is proposed

\begin{equation}
S_{t \vert t}= \lambda_{t \vert t} + \alpha, \,\,\, t \in [0,n] 
\end{equation}
where $\alpha$ is a deterministic positive quantity. From the above translation, $dS_{t}=d \lambda_{t}$ for any time $t$. There are many values that could be assigned to $\alpha$, but in our study the most appropriate choice is the 99th percentile of the empirical distribution of the intensity. The Stochastic Differential Equation (SDE) of $S_{t}$ becomes:
\begin{eqnarray*}
dS_{t}&=& d\lambda_{t}\\
&=& \kappa (\theta-\lambda_{t})dt+\sigma\sqrt{\lambda_{t}}dB_{t}\\
&=& \kappa (\theta-(S_{t}-\alpha))dt+\sigma \sqrt{S_{t}-\alpha} dB_{t}\\
&=& \kappa (\theta+\alpha-S_{t}) dt+\sigma \sqrt{S_{t}(1-\frac{\alpha}{S_{t}})} dB_{t}\\
&=& \kappa (\theta+\alpha-S_{t}) dt+\sigma \sqrt{1-\frac{\alpha}{S_{t}}} \sqrt{S_{t}} dB_{t}
\end{eqnarray*}
\begin{equation}
dS_{t}= \kappa (\theta^{*}-S_{t}) dt+\sigma_{t}^{*} \sqrt{S_{t}} dB_{t}
\label{nint}
\end{equation}

with $\theta^{*}=\theta+\alpha$ and $\sigma_{t}^{*}=\sigma \sqrt{1-\frac{\alpha}{S_{t}}}$. $S_{t}$ follows an extended CIR with stochastic $\sigma_{t}^{*}$. $S_{t}$ is a mean reverting process with $\kappa$ being the rate of mean reverting, $\theta^{*}$ the long run average and $\sigma_{t}^{*}$ the volatility. If $S_{t}$ approaches $\alpha$, $\sigma_{t}^{*}=\sigma \sqrt{1-\frac{\alpha}{S_{t}}}$ approaches $0$ cancelling the effect of randomness, so $S_{t}\geq \alpha$. 
\paragraph{Step 2: Updating parameters} The SDE in (\ref{nint}) does not lead to the analytical expression of the proposition (\ref{bida}) because $\sigma_{t}^{*}$ is stochastic but not time-dependent; see \cite{boyle}. In order to apply (\ref{fraud}) to predict the fraud occurrence with the new intensity $S_{t}$, the SDE of $S_{t}$ is modified as follows 
\begin{equation}
dS_{t}= \kappa (\theta^{*}-S_{t}) dt+\sigma^{*} \sqrt{S_{t}} dB_{t}
\label{nnint}
\end{equation}
with $\sigma^{*}=E(\sigma_{t}^{*})$. In this context, the parameters $\kappa$, $\theta$ and $\sigma^{*}$ can be updated by Ordinary Least Square (OLS). The discretised form of equation (\ref{nnint}) is given by
\begin{equation}
S_{t+\Delta t}-S_{t}= \kappa (\theta^{*}-S_{t}) \Delta t+\sigma^{*} \sqrt{S_{t}} \xi_{t}
\label{ds}
\end{equation}
where $\xi_{t}$ is a Gaussian white noise with $E(\xi_{t})=0$ and $V(\xi_{t})=\Delta t$. For performing OLS, we transform (\ref{ds}) by
\begin{equation}
\frac{S_{t+\Delta t}-S_{t}}{\sqrt{S_{t}}}= \frac{\kappa \theta^{*}\Delta t}{\sqrt{S_{t}}}-\kappa \sqrt{S_{t}} \Delta t +\sigma^{*} \xi_{t}
\end{equation}
Then, the drift parameters $\kappa$ and $\theta^{*}$ are found by minimizing the OLS objective function 
\begin{equation}
\sum_{i=1}^{n-1} \left( \frac{S_{t_{i+1}}-S_{t_{i}}}{\sqrt{S_{t_{i}}}}- \frac{\kappa \theta^{*}\Delta t}{\sqrt{S_{t_{i}}}}+\kappa \sqrt{S_{t_{i}}} \Delta t\right) ^{2} 
\end{equation}
The diffusion parameter estimate $\hat{\sigma^{*}}$ is found by dividing the standard deviation of residuals by $\sqrt{\Delta t}$. So, in the context of negative values for $\lambda_{t}$ the updated parameters $\kappa$, $\theta^{*}$ and $\hat{\sigma^{*}}$ for the new intensity $S_{t}$ are finally used in proposition (\ref{bida}) for fraud prediction.

\begin{table}[t]
\caption{Summary Statistics of Risk-Scores by transaction and of fraud proportion by client in the full dataset. The clients with no fraud events and the clients with $100\%$ of fraud proportion are removed from this full dataset.}
\label{table:1}
\vskip 0.15in
\begin{center}
\begin{small}
\begin{sc}
\resizebox{\columnwidth}{!}{
\begin{tabular}{lccccr}
\toprule
 &Min & Max & Mean & Median & St dev\\
\midrule
Risk-Score & 0 & 0.689 & 0.02 & 0.007 & 0.04\\ 
 Fraud Proportion & 0.0004 & 0.99 & 0.22 & 0.09 & 0.26\\
\bottomrule
\end{tabular}
}
\end{sc}
\end{small}
\end{center}
\vskip -0.1in
\end{table}

\section{Datasets}
The data provided by NetGuardians is a simulated banking transactions dataset created by NetGuardians from anonymized real-world banking datasets. It covers a period of 2 years and contains a total of more than 15 millions transactions made by more than 120'000 clients. The dataset includes a total of $49$ features such as the transaction dates, transaction amounts, transaction senders IDs, the account numbers of transaction recipients, bank countries receiving transactions, etc... It is important to mention that there is no fraudulent labelling in the dataset. 

\begin{figure}[H]
\begin{center}
\advance\leftskip-1.3cm
\includegraphics[width=0.55\textwidth,left]{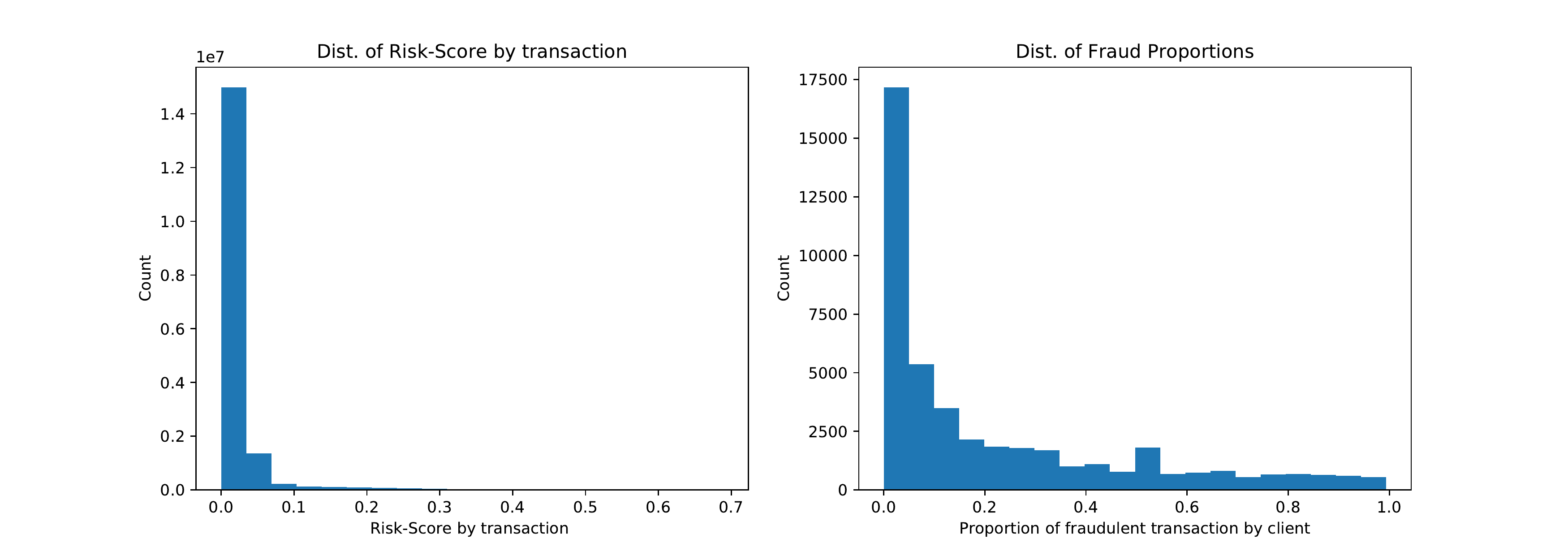}
\end{center}
\caption{Histogram of fraud Risk-Score and of fraud proportion by client in the full dataset. The clients with no fraud events and the clients with $100\%$ of fraud proportion are removed from this full dataset. }
\label{fig2}
\end{figure}

\begin{table}[t]
\caption{Repartition of the number of the clients and the number of distinct transactions in the 7 subsets.}
\label{table:2}
\vskip 0.15in
\begin{center}
\begin{small}
\begin{sc}
\resizebox{\columnwidth}{!}{
\begin{tabular}{lcccr}
\toprule
 Groups & Nb of clients & Nb of dist trans\\
\midrule
$P\leq 0.004$ & 1391 & 627555\\ 
 $0.004<P \leq 0.006$ & 1557 & 431456\\
 $0.006<P \leq 0.01$ & 2487 & 609649\\
 $0.01<P \leq 0.05$ & 11735 & 2090126\\
 $0.05<P \leq 0.08$ & 3802 & 531486\\ 
 $0.08<P \leq 0.1$ & 1785 & 231305\\   
 $0.1<P \leq 0.15$ & 2312 & 401052\\ 
 Total & 26069 & 4922629\\\bottomrule
\end{tabular}
}
\end{sc}
\end{small}
\end{center}
\vskip -0.1in
\end{table}
The model with the Kalman filter is unsupervised learning in the sense that no label is required to detect the fraud event. Instead, information about the likelihood of fraud for each transaction is required to define the measurement equation. Many machine learning or statistical methods such as the dimensionality reduction method, logistic regression, Z-Score, etc...can be used to estimate the fraud probability. However, the performance of the model will strongly depend on the accuracy of such method in estimating the fraud probability. In this study, for the sake of simplicity we focus on the fraud Risk-Score provided by NetGuardians to define the measurement equation. Risk-Score fraud is a metric that gives an estimate of the fraud proportion for each transaction based on the recent information. In addition for reasons of confidentiality the methodology for calculating the Risk-Score will not be mentioned.
\begin{table}[H]
\caption{Process of fraud prediction in the test set: The intensity is updated by the Kalman filter and the fraud is predicted on the next transaction.\\}
\centering
\resizebox{\columnwidth}{!}{
\begin{tabular}{||c||} 
 \hline
   \text{At time $t$}\\ [0.5ex] 
 \hline
Given starting $\lambda_{t \vert t}$ at time $t$\\ 
$P\left(N_{t+1}-N_{t}=1 \vert \mathcal{F}_{t}\right)=\text{Fraud prediction for time $t+1$}$\\[1ex]
 \hline \hline
 \text{At time $t+1$}\\ [0.5ex]
 \hline
Risk-Score at time $t+1$ is provided\\ 
$\lambda_{t+1 \vert t+1}$ is updated by the Kalman Filter\\ 
$P\left(N_{t+2}-N_{t+1}=1 \vert \mathcal{F}_{t+1}\right)=$\\
$\text{Fraud prediction for time $t+2$}$\\[1ex]
\hline \hline
\text{At time $t+2$}\\ [0.5ex]
\hline
Repeat the process as at time $t$ and so on ...\\
\hline 
\end{tabular}
}
\label{table:3}
\end{table}
To complete our study, we also need to generate artificial fraud labels in our dataset. The main reason is that we expect to compare the performance of the Kalman filter model with other intensity-based models such as Homogeneous and Non-homogeneous Poisson process which are supervised methods and have been investigated in \cite{regis}. There are several possibilities to create artificial labels according to the specified criterion. In our study, artificial labelling is based on the following criterion: transactions for which the banks receiving the money are located outside Switzerland are flagged as fraudulent. The main reason of using this criterion is that the provided labels are very correlated with the fraud Risk-Score. The point biserial correlation between the artificial labels and the Risk-Score is around $0.80$. The proportion of fraud which is the number of fraudulent transactions over the total number of transactions is calculated for each client. According to the labelling methodology, we found some clients with a fraud rate of $100\%$. This concerns the clients for whom the institutions receiving the money are all located outside of Switzerland. To be realistic, we remove these clients from our analysis. Also, the clients with no fraud events in the full dataset are removed because the datasets of these clients contain only one class and the classification problem is not defined. Table \ref{table:1} and and figure \ref{fig2} show the descriptive statistics of the fraud Risk-Scores and of the client's fraud proportions in the cleaned dataset. We remark that the two distributions are skewed and the dataset is unbalanced since most of the clients have small proportion of fraud. The mean and the median of the fraud proportions are $22\%$ and $9\%$ respectively. 

However, it is important to note that with the labelling criteria, the above distribution of the fraud proportions is not representative of the true fraud distribution because in practice the majority of fraud proportions are less than $1\%$. To investigate our analysis in a general framework of imbalanced dataset, we focussed on clients with a fraud proportion below $15\%$ which leads to a sample of $26,069$ clients with $4,922,629$ separate transactions. Next, we divide this sample in seven subsets containing various fraud profiles. The first subset regroups the clients for proportion of fraud less than $0.4\%$. The second subset concerns the clients with proportion between $0.4\%$ and $0.6\%$. The third subset concerns the clients with proportion between $0.6\%$ and $1\%$. The fourth subset concerns the clients with proportion between $1\%$ and $5\%$. The fifth subset concerns the clients with proportion between $5\%$ and $8\%$. The sixth subset concerns the clients with proportion between $8\%$ and $10\%$. The last subset concerns the clients with proportion between $10\%$ and $15\%$.

\begin{table*}[t]
\caption{Results for AUC medians for the various models in each group.}
\label{table:4}
\vskip 0.15in
\begin{center}
\begin{small}
\begin{sc}
\begin{tabular}{lccccccccr}
\toprule
 Models   & Group $1$ & Group $2$ & Group $3$ & Group $4$ & Group $5$ & Group $6$ & Group $7$\\
\midrule
HomoPoisson & 0.5 & 0.5 & 0.5 & 0.67 & 0.68 & 0.67 & 0.68\\ 
 LinearPoisson & 0.5 & 0.5 & 0.51 & 0.72 & 0.69 & 0.69 & 0.7\\ 
 QuadraticPoisson & 0.5 & 0.5 & 0.5 & 0.7 & 0.69 & 0.68 & 0.7\\
 NaiveApproach & 0.5 & 0.5 & 0.5 & 0.5 & 0.5 & 0.5 & 0.5\\
 ScoreApproach & 0.64 & 0.6 & 0.6 & 0.56 & 0.56 & 0.55 & 0.5\\
 KFApproach & 0.82 & 0.8 & 0.76 & 0.67 & 0.59 & 0.57 & 0.57\\
\bottomrule
\end{tabular}
\end{sc}
\end{small}
\end{center}
\vskip -0.1in
\end{table*}

\begin{table*}[t]
\caption{Results for A/P medians for the various models in each group.}
\label{table:5}
\vskip 0.15in
\begin{center}
\begin{small}
\begin{sc}
\begin{tabular}{lccccccccr}
\toprule
 Models   & Group $1$ & Group $2$ & Group $3$ & Group $4$ & Group $5$ & Group $6$ & Group $7$\\
\midrule
 HomoPoisson & 0.01 & 0.03 & 0.07 & 0.46 & 0.46 & 0.49 & 0.50\\ 
 LinearPoisson & 0.01 & 0.03 & 0.37 & 0.53 & 0.5 & 0.5 & 0.54\\ 
 QuadraticPoisson & 0.01 & 0.03 & 0.21 & 0.52 & 0.48 & 0.5 & 0.53\\
 NaiveApproach & 0.01 & 0.02 & 0.02 & 0.05 & 0.09 & 0.12 & 0.14\\
 ScoreApproach & 0.03 & 0.04 & 0.06 & 0.12 & 0.19 & 0.22 & 0.24\\
 KFApproach & 0.07 & 0.09 & 0.09 & 0.12 & 0.17 & 0.19 & 0.20\\
\bottomrule
\end{tabular}
\end{sc}
\end{small}
\end{center}
\vskip -0.1in
\end{table*}

The boundaries of the subsets are chosen to ensure a minimum number of 1000 clients in each subset. Table \ref{table:2} shows the distribution of the number of clients and the number of transactions in each group. Among the $7$ subsets, the first group contains the small number of clients and the fourth subset contains the large volume of clients. In figure \ref{fig3} the Boxplots for the proportion of fraud in each group are represented. In each subset, we select randomly a fixed number $N$ of clients and we train and test our model on the transactions for each client. $N=1391$ which represents the number of clients in the first group (the smallest group). The training set represents the first $80\%$ in chronological order of transactions for each client where the intensity parameters are estimated. The test set represents the last $20\%$ and the fraud events are predicted with the estimated parameters. We compare our model to other intensity-based model such as the Homogeneous and the Inhomogeneous Poisson process. For the Inhomogeneous Poisson process, we focussed on $\lambda(t)=a+bt$ and 
$\lambda(t)=a+bt+ct^{2}$ as in \cite{regis}. We consider two other models; the baseline model and the Risk-Score model. The baseline model consists of calculating the proportion of fraud in the training set and using this probability to predict fraud in the test set. The Risk-Score model consists in using the Risk-Score of the transaction at time $t$ as the fraud prediction on the next transaction of the client. All these models are compared to our Kalman filter model. Finally, the predictive performance is summarized in each subset using two performance measures: the ROC AUC and the Average Precision (AP) score.

\section{Results}
Our model is noted by \text{KFApproach}. In the training set for each client, the equations (\ref{sta1}) and (\ref{mas1}) are estimated by the Kalman filtering (\text{KF}) process as described above. We set the starting value of the intensity $\lambda_{0 \vert 0}=0$ and the variance $v_{0 \vert 0} =10$. As explained above a high value is given to the variance in order to take account the uncertainty in the estimation of the starting value $\lambda_{0 \vert 0}$. 

\begin{figure}[H]
\begin{center}
\advance\leftskip-1.5cm
\includegraphics[width=0.57\textwidth,left]{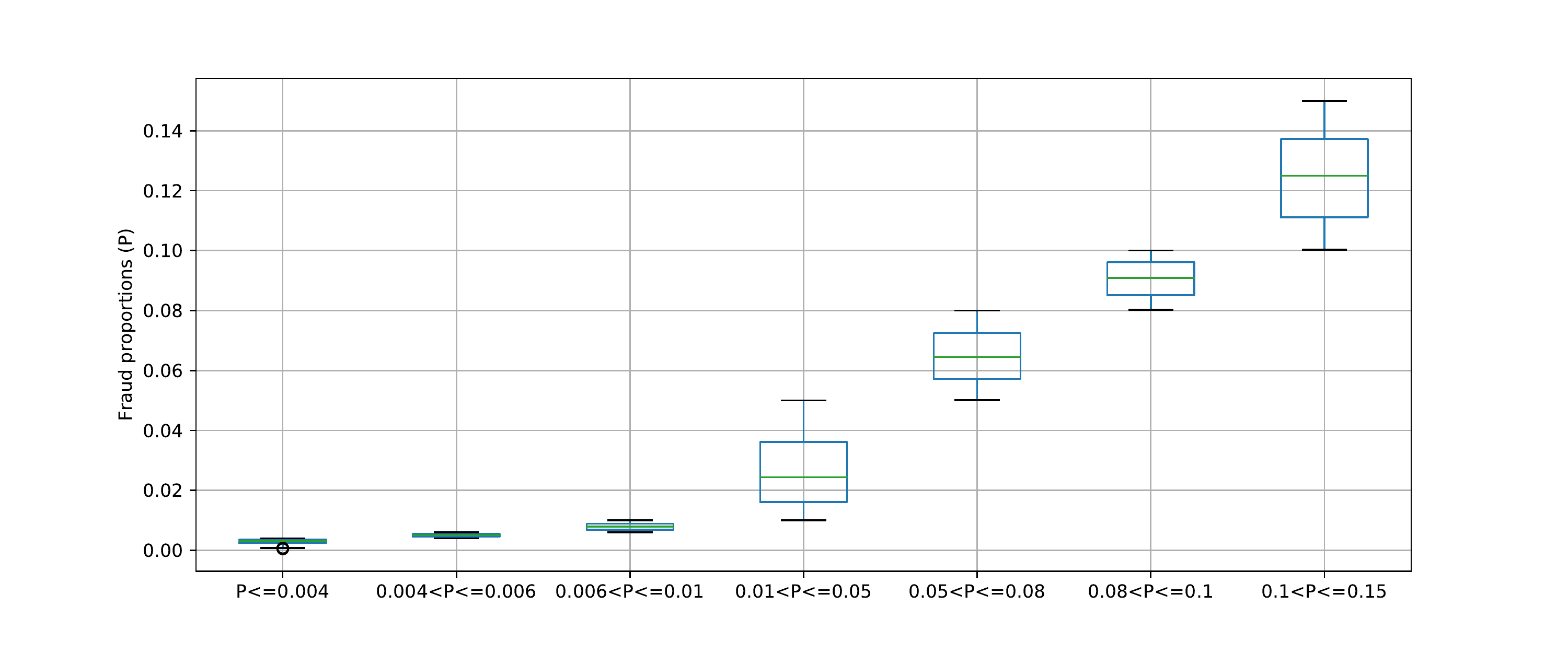}
\end{center}
\caption{Boxplots for the proportions of fraud in the seven subsets. Clients with the fraud proportion $P\leq 15\%$ are grouped in seven subsets representing different fraud profiles.}
\label{fig3}
\end{figure}

Thus in the training set, we generate with the \text{KF} the dynamic intensities related to the transactions of the client. In the test set, the prediction process described in table \ref{table:3} works as follows: At time $t$, given the updated intensity on transaction $\lambda_{t \vert t}$, the equation (\ref{fraud}) in proposition (\ref{bida}) is used to predict the fraud probability on the client's next transaction at time $t+1$. At time $t+1$, using the information provided by the Risk-Score on the transaction, the intensity at time $t+1$, $\lambda_{t+1 \vert t+1}$ is updated by the \text{KF}. With $\lambda_{t+1 \vert t+1}$, the fraud probability on transaction at time $t+2$ is predicted and the process is repeated until the last transaction in the test set

There is a total of $5$ models to compare with the Kalman filter model:
\begin{enumerate} 
\item The first model is the homogeneous Poisson process ($\lambda(t)=\lambda$). The constant intensity $\lambda$ is estimated in the training set. It is used to predict the fraud event in the whole test set. We designate this model by \text{HomoPoisson}.
\item The second model is the non-homogeneous Poisson process whose intensity is a linear function of time ($\lambda(t)=a+bt$). The intensity parameters are estimated in the training set and are used for the prediction of fraud in the whole test set. It is noted \text{LinearPoisson}.
\item The third model is the non-homogeneous Poisson process whose intensity is a quadratic function of time ($\lambda(t)=a+bt+ct^{2}$). The procedure is the same as in \text{LinearStatic}. We designate this model \text{QuadraticPoisson}
\item  the fourth model is the baseline model which consists of estimating the probability of fraud in the training set and using the same probability for the prediction in the test set. Thus, the predicting probabilities are the same for all transactions in the test set. This is equivalent to a random classifier because the model does not have the capacity to discriminate between an authentic transaction and a fraudulent transaction. We designate this model \text{NaiveApproach} 
\item The fifth model is based on the Risk-Score of the transactions and consists of predicting the Risk-Score using a Random Walk process. It supposes that the Risk-Scores follow the following process
\begin{eqnarray}
X_{t}=X_{t-1}+u_{t}
\end{eqnarray}
where $X_{t}$ is the Risk-Score of transaction at time $t$ and $u$ is a white noise. In this context, $X_{t}$ is not stationary and the best prediction of the fraud proportion on transaction at time $t+1$ is the fraud proportion at time $t$. This approach is indicated by \text{ScoreApproach}
\end{enumerate} 

\begin{figure}[H]
\begin{center}
\advance\leftskip-1.5cm
\includegraphics[width=0.57\textwidth,left]{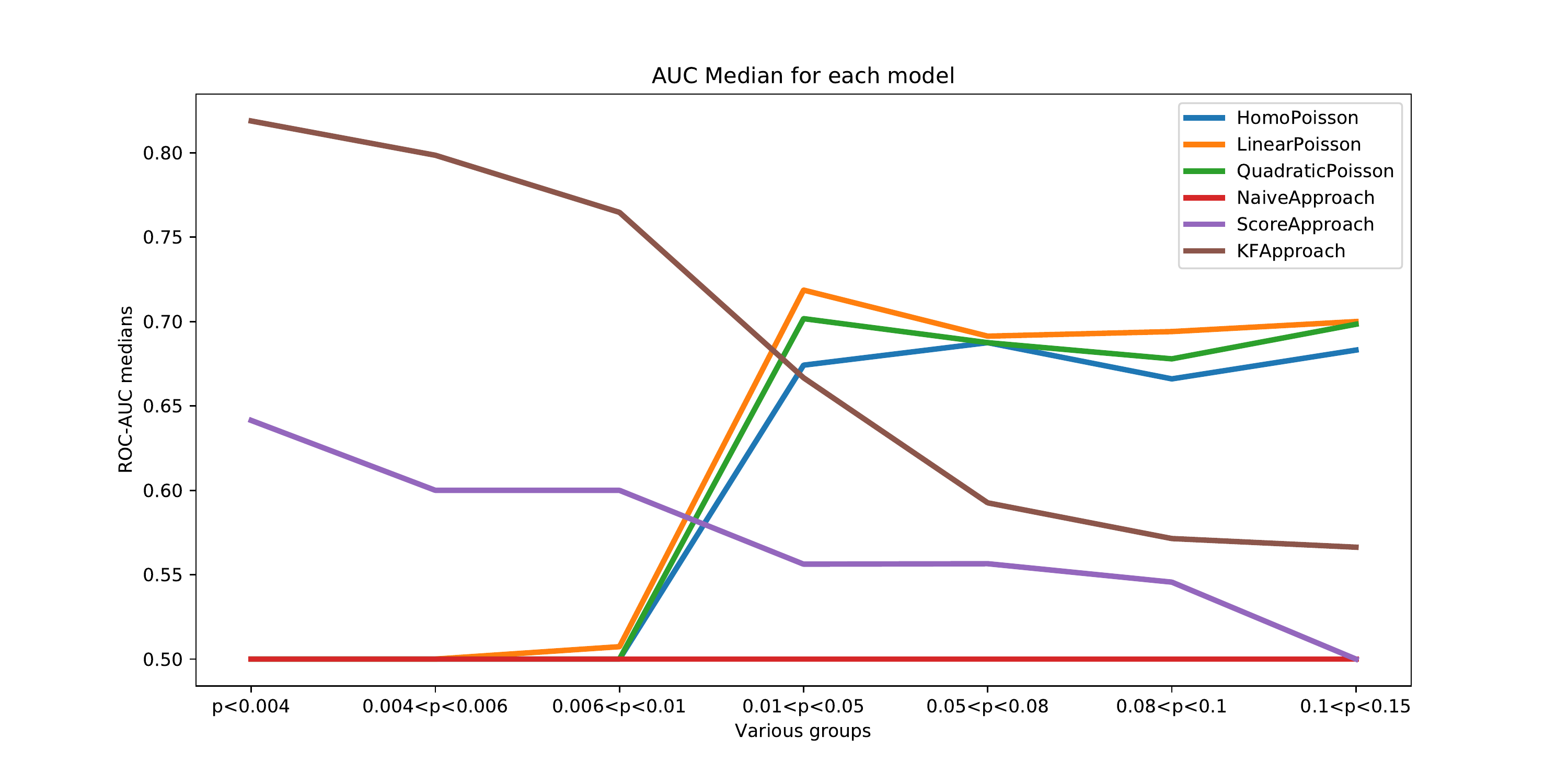}
\end{center}
\caption{Performance for AUC Medians for various models in each group. The plot shows the performance for each model with the degree of imbalanced dataset.}
\label{fig4}
\end{figure}

Below, we present the results of the performance for the $6$ models based on the predicting probabilities and the artificial labels in the test set. For comparison reasons, in each group the $1391$ performances calculated are summarized using the median. Table \ref{table:4} shows the results for the AUC (Area Under The Curve)-ROC (Receiver Operating Characteristics) for the different models in each group. The results show that \text{KFApproach} performs better than the other $5$  models in the group $1$ to group $3$ that is when the probability of fraud is less than $1\%$. It is followed by \text{ScoreApproach}. When the fraud probability is greater than $1\%$, all Poisson models outperform the \text{KFApproach} and \text{ScoreApproach}. We also remark that \text{NaiveApproach} works less well than the other models. It is important to note that \text{KFApproach} outperforms \text{ScoreApproach} in all groups; this can be attributed to the fact that Kalman filter combines information on the Risk-Scores of the transactions with additional information given by the instantaneous fraud rate for the client. In fact, the Risk-Score information is described by the measurement equation and the information on the instantaneous rate of fraud is described by the state equation.

Figure \ref{fig4} plots the AUC medians for the different models in each group. As mentioned above the Kalman filter model followed by  the \text{ScoreApproach} outperforms significantly the Poisson models in higher imbalanced dataset ($P<=1\%$) and this performance decreases when the probability of fraud for groups increases. We have the opposite effect for the Poisson models in the sense that their performance increases from $P>=1\%$ and becomes relatively stable up to $P<=15\%$. 

This concludes that the stochastic approach for the intensity is more adapted to the fraud prediction in high imbalanced dataset. Finally among the three Poisson models, \text{LinearStatic} is the best one followed by \text{QuadraticPoisson}; for more details, see \cite{regis}. We complete our analysis by focussing on the Precision-Recall performance. The Average Precision (A/P) is calculated which is an estimate of the area under the precision-recall curve and their results are summarized in the table \ref{table:5}. When $P<=0.6\%$, we notice that \text{KFApproach} followed by the \text{ScoreApproach} outperforms the Poisson models and the baseline approach. When $P>0.6\%$, all Poisson models except the \text{HomoPoisson} in the group $3$ perform better than \text{KFApproach}. The baseline approach still works less well than the other models. Figure \ref{fig5} shows the evolution of the A/P median with the probability of fraud and we observe that all models tend to increase with the degree of balanced dataset. It is important to note that with the A/P, \text{ScoreApproach} outperforms \text{KFApproach} when $P>5\%$. Finally, we conclude the important results: 1. \text{KFApproach} is a mixing approach combining the dynamic intensities with Risk-Scores. The ROC-AUC shows that \text{KFApproach} always outperforms the \text{ScoreApproach}; this shows that the prediction of the fraud probability by the Kalman filter is better than the prediction of the Random Walk process on the Risk-Scores. 2. \text{KFApproach} followed by the \text{ScoreApproach} works better than the other models in high imbalanced dataset. The analyzes on ROC-AUC and the  A/P confirm this result when $P<=1\%$ and $P<=0.6\%$ respectively. In fact, in a very imbalanced dataset, there is less fraud information and the intensity-based approach only is not enough for the prediction of fraud events. \text{KFApproach} uses additional information on the Risk-Score of the transactions and this explains why it outperforms the rest of the  models. So, the contribution of the Risk-Scores to the \text{KFApproach} in high imbalanced dataset is more significant. Therefore, \text{KFApproach} would be an interesting approach for detecting fraud in high imbalanced dataset. 3. Analysis of the AUCs shows that the performance of \text{KFApproach} as well as \text{ScoreApproach} decreases when the fraud probability of the dataset increases. On the other hand, A/P shows that the shapes of  the two approaches tend to be tilted upwards. 4. Analysis of the AUCs and the A/P shows that the Poisson models perform better than \text{KFApproach} and  \text{ScoreApproach} in more balanced dataset because more information of fraud events are available for estimating only the intensity. The A/P shows that \text{ScoreApproach} outperforms \text{KFApproach} when $P>5\%$ which is contrary to the analysis of the ROC-AUC. Finally, all models perform better than the baseline model.

\begin{figure}[H]
\begin{center}
\includegraphics[width=3.5in]{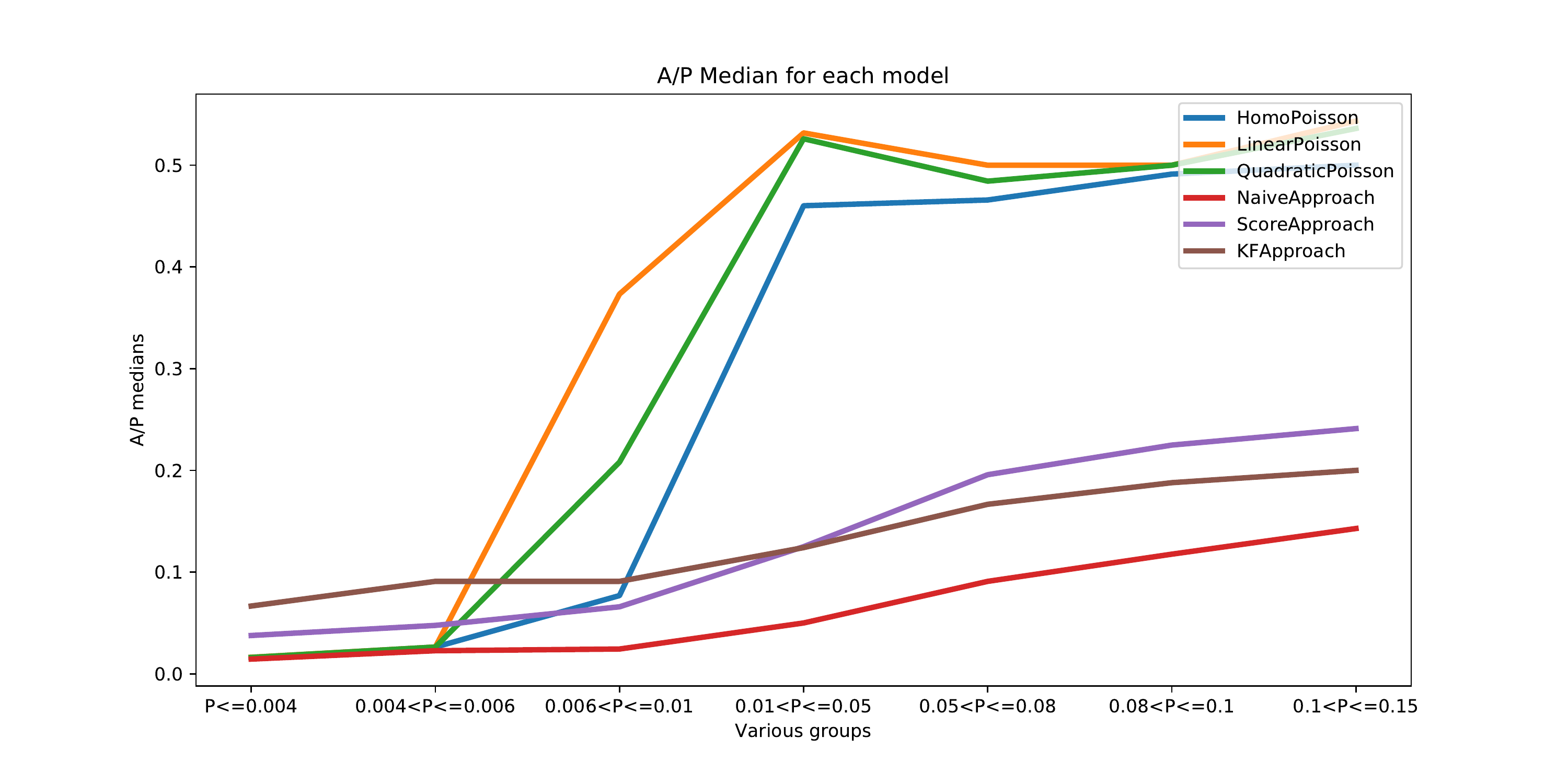}
\end{center}
\caption{Performance for A/P Medians for various models in each group. The plot shows the performance for each model with the degree of imbalanced dataset.}
\label{fig5}
\end{figure}

\section{Conclusion}
 An unsupervised approach based on a stochastic intensity model is investigated to detect fraud in imbalanced dataset. The Cox-Ingersoll-Ross (CIR) process is proposed with the advantage to guarantee a positive value for the fraud intensity. In this context, a closed form solution for the prediction probability of fraud is derived. Using the probability of fraud observed on the transactions, we have shown how to estimate the dynamic intensities by the Kalman-Filter method. Our methodology is applied to financial datasets. To evaluate the performance of our model, we consider in the paper other models of predicting fraud by the intensity-based approach.  These include the homogeneous Poisson process, the linear and quadratic inhomogeneous Poisson processes, a baseline approach and a random walk approach. All these models are compared to our model. We found that our Kalman filter approach outperforms the other approaches in the case of the more imbalanced dataset. When the fraud probability of the dataset increases, the performance of our model decreases. In this context, the linear intensity model is the better one following by the quadratic and the homogeneous Poisson process. Finally, all the models perform better than the baseline model. The main contributions of this paper are: 1. Our model is the first unsupervised approach for fraud detection using a stochastic intensity. It would be useful for datasets for which the fraud labels are not available. 2. The Cox-Ingersoll-Ross (CIR) process conducts to closed form solutions with few parameters and this greatly reduces the computational costs and the over-fitting. 3. Instead of using the observed fraud probability to estimate the intensity by the Kalman filter, the model could also be challenged by applying deep machine learning algorithms.  4. Our model is complete in the sense that it combines the information on the instantaneous rate of fraud with the fraud causality for the prediction of fraud events. So, the question of why and how the fraud occurs is investigated.
\bibliography{toto}
\bibliographystyle{icml2019}


\end{document}